\newcommand{\reals}{\mathds{R}}
\newcommand{\complex}{\mathds{C}}
\begin{document}
\theoremstyle{plain}
\newtheorem{theorem}{Theorem}
\newtheorem{corol}{Corollary}
\theoremstyle{definition}
\newtheorem{definition}{Definition}
\newtheorem{example}{\small Example}
\newtheorem{lema}{Lemma}
\newtheorem{propos}{Proposition}

\begin{frontmatter}
\title{Swift chiral quantum walks}
\author{Massimo Frigerio ${}^{1,2}$,  Matteo G. A. Paris ${}^{1,2}$}
\address{
${}^{1}$ Quantum Technology Lab, Dipartimento di Fisica {\em Aldo Pontremoli}, Universit\`a degli 
Studi di Milano, I-20133 Milano, Italy \\
${}^{2}$ INFN, Sezione di Milano, I-20133 Milano, Italy 
}
\date{\today}
\begin{abstract}
A continuous-time quantum walk (CTQW) is {\em sedentary} if the return probability in the starting vertex is close to one at all times. Recent results imply that, when starting from a maximal degree vertex, the CTQW dynamics generated by the Laplacian and adjacency matrices are typically sedentary. In this paper, we show that the addition of appropriate complex phases to the edges of the graph, defining a \emph{chiral} CTQW, can cure sedentarity and lead to \emph{swift chiral quantum walks} of the adjacency type, which bring the return probability to zero in the shortest time possible. We also provide a no-go theorem for swift chiral CTQWs of the Laplacian type. Our results provide one of the first, general characterization of tasks that can and cannot be achieved with chiral CTQWs.
\end{abstract}

\begin{keyword}
continuous quantum walks \sep sedentary walks \sep regular graphs

\MSC
05C50 \sep 05C81 \sep 15A16 \sep 81P45
\end{keyword}

\end{frontmatter}
\section{Introduction}
In the physical modeling of quantum discrete structures, the notion of continuous-time quantum walks (CTQWs) is widely adopted as a convenient mathematical construct \cite{qw1,qw2,qw3,qw4,qw5,qw6,qw7}. Their applications range from quantum computation \cite{childs2009universal} and quantum search \cite{shenvi2003quantum,childs2004spatial} to the study of perfect state transfer \cite{alvir2016perfect,christandl2004perfect,kendon2011perfect,mathcycles} and quantum transport \cite{beggi2018probing,agliari2008dynamics,salimi2010continuous,darazs2014transport,li2020quantum,rai2008transport,blumen2006coherent,mulken2007quantum,xu2008coherent,yalouz2018continuous,mulken2007survival,agliari2010continuous,cav22,razzoli11}.
Let $\mathcal{G} = ( \mathcal{V} , \mathcal{E})$ be a simple, connected, unweighted and undirected graph with vertex set $\mathcal{V}$ and edge set $\mathcal{E}$. Let $\vert \mathcal{V} \vert = N$ be the order of $\mathcal{G}$ and $\mathbf{A}$ its adjacency matrix, once a numbering of the vertices has been assigned, i.e. $\mathbf{A}_{jk} = 1$ if vertices $j$ and $k$ are connected, while all the other entries (including the diagonal ones) are null. We will denote by $\mathbf{L} = \mathbf{D} - \mathbf{A}$ the Laplacian of $\mathcal{G}$, where $\mathbf{D}$ is the diagonal matrix encoding the (positive) degrees of the vertices of $\mathcal{G}$ \cite{godsilbook,graphspectra}. In its simplest form, a quantum walk on $\mathcal{G}$ is described by a Hilbert space isomorphic to $\complex^{N}$ with a reference basis formed by vectors $\underline{e}_{v}$ that are the characteristic vectors of the vertices of $\mathcal{G}$ (the vertex basis), and by a one-parameter subgroup of the unitary group on $\complex^{N}$ generated by either $\mathbf{A}$ or $\mathbf{L}$ \cite{wong2016laplacian} and effecting the time evolution of the vectors in the Hilbert space. More concretely, given an initial vector $\underline{\psi}_{0} = \underline{\psi}(0)$, its evolution will be governed by the Schr\"{o}dinger equation \cite{sakurai2011san}:
\begin{equation}
    i \dfrac{d}{dt} \underline{\psi} (t) \ = \ \mathbf{H} \  \underline{\psi}(t)
\end{equation}
where $\mathbf{H}$ can be $\mathbf{A}$ or $\mathbf{L}$, and whose solution is simply:
\begin{equation}
    \underline{\psi}(t) \ = \ e^{- i \mathbf{H} t } \underline{\psi}_{0}
\end{equation}
Recently, more generic hermitian generators, still reflecting the connectivity of $\mathcal{G}$, have been considered \cite{cqw1,cqw2,frigerio21a,frigerio21b}. Specifically, a \emph{chiral adjacency matrix} $\tilde{\mathbf{A}}$ on $\mathcal{G}$ is an $N \times N$ hermitian matrix such that $\mathbf{A}_{jk} =1 \implies \tilde{\mathbf{A}}_{jk} = e^{i \theta_{jk}}$ with $\theta_{jk} \in [0, 2\pi )$ and $\mathbf{A}_{jk}= 0 \implies \tilde{\mathbf{A}}_{jk} = 0$ ($\theta_{jk} = - \theta_{kj}$ is required by the hermiticity condition). In other words, a generic chiral adjacency matrix $\tilde{\mathbf{A}}$ is obtained by replacing each nonzero entry of $\mathbf{A}$ with a generic complex phase and enforcing that the resulting matrix is hermitian. A \emph{chiral Laplacian} on $\mathcal{G}$ is an hermitian matrix $\tilde{\mathbf{L}} = \mathbf{D} - \tilde{\mathbf{A}}$ where $\tilde{\mathbf{A}}$ is a chiral adjacency matrix. Finally, a \emph{chiral Hamiltonian} on $\mathcal{G}$ is an hermitian matrix of the form $\mathbf{H} = \tilde{\mathbf{D}} + \tilde{\mathbf{A}}$ where $\tilde{\mathbf{D}}$ is a generic, real diagonal matrix and $\tilde{\mathbf{A}}$ again a chiral adjacency matrix. We will say that a chiral quantum walk is \emph{of the adjacency type} (resp. \emph{of the Laplacian type}) if it is generated by a chiral adjacency matrix (resp. a chiral Laplacian). Among others, the typically relevant quantities for these systems are transport probabilities between two vertices $j$ and $k$, defined by:
\begin{equation}
    p^{\mathbf{H}}_{j \to k} (t) \ = \ \vert \underline{e}_{k}^{\dagger} e^{-i t \mathbf{H}} \underline{e}_{j} \vert^{2}
\end{equation}
where now $\mathbf{H}$ can be a generic chiral Hamiltonian on the graph and $\vert \cdot \vert$ is the modulus of a complex number. Given a vertex $v \in \mathcal{G}$ and a numbering of the vertices, its characteristic vector $\underline{e}_{v}$ is the vector in the canonical basis of $\complex^{N}$ whose only nonzero entry is in correspondence to the number associated with $v$.

\subsection{Sedentarity in Laplacian and adjacency continuous-time quantum walks}
To set the stage for our results, it is useful to introduce some nomenclature for graphs with maximally connected vertices: a \emph{cone} on a graph $\mathcal{G}'$ is a graph $\mathcal{G}$ obtained by adding a vertex $v$ to the vertex set of $\mathcal{G}'$ and connecting it to every vertex in $\mathcal{G}'$.  
Recently, a number of results for both Laplacian and adjacency type CTQWs \cite{godsilsedent,razzoli22} indicate that the evolution starting from maximally connected vertices can be plagued by a sedentary behaviour, meaning that the return probability to the initial vertex stays close to $1$ at all times if its degree is sufficiently large. The maximally connected vertex $v$ will be called the \emph{apex} of the cone. For Laplacian quantum walks this is always the case, since their evolution from a maximally connected vertex in any graph with $N+1$ vertices is the same as the one from the center (labelled by $1$) of an $N$-pointed star graph, whose return probability is \cite{Xu_2009}:
\begin{equation}
\label{eq:Lthm}
    p^{L}_{1 \to 1} (t) \ = \ 1 - 2 \frac{N-2}{(N-1)^2} \left[ 1 - \cos ( N t ) \right]
\end{equation}
and it is clearly sedentary, since $p^{L}_{1 \to 1} (t) \rightarrow 1 \ \forall t \in \reals$ as $N \to +\infty$. This general result was proved in \cite{razzoli22}. 

Concerning adjacency CTQWs, instead, the situation is currently less clear. The dynamics starting from the central vertex of the $N$-pointed star graph in this case is not sedentary, nor it is the one which starts from the center of the wheel graph. Moreover, each adjacency CTQW starting from a maximally connected vertex could yield a different evolution, depending upon the connectivity of the remaining vertices. However, it was shown in \cite{godsilsedent} that adjacency quantum walks from the apex of cones over several infinite families of graphs are sedentary. Importantly, this is the case when the cone is over $m$-regular graphs with $N$ vertices and the degree $m$ grows faster than linearly with $N$. 

\par{In this paper, we show that the addition of appropriate complex phases to the edges of several graphs eliminates sedentarity and allows one to obtain \emph{swift chiral quantum walks} of the adjacency type, where the returning probability goes to zero in the shortest time possible. We also provide a no-go theorem for swift chiral CTQWs of the Laplacian type. Overall, our results provide one of the first, general characterization of tasks that can and cannot be achieved with chiral CTQWs and they hint at several challenging graph-theoretical questions concerning the existence of swift phases configurations.}

\section{Results for adjacency quantum walks}
Let us consider a quantum walk starting at the apex of a cone graph $\mathcal{G}$ over an $m$-regular graph of $N$ vertices, whose state vector will be labelled $\underline{e}_{1}$. Since this vertex is maximally connected, the adjacency matrix $\mathbf{A}$ of the full graph will have the following structure:
\begin{equation}
\label{eq:AMgraph}
    \mathbf{A} \ = \ \left( \begin{array}{c|cccc}
    0  & 1 & 1 & 1 & \dots \\
    \hline 
    1 & & & & \\
    1 & & & & \\ 
    1 & & &  \mathbf{A'}  & \\
    \vdots & & & &  \\  \end{array} \right)
\end{equation}
where $\mathbf{A'}$ is the adjacency matrix of the $m$-regular graph $\mathcal{G}'$. Since each vertex has exactly $m$ neighbours, calling $\underline{1}_{N}$ the vectors of all ones in the subspace of $\mathcal{G}'$ we have:
\begin{equation}
\label{eq:A'reg}
    \mathbf{A'} \underline{1}_{N}  \ = \ m \underline{1}_{N} 
\end{equation}
But notice that $\mathbf{A}$ maps $\underline{e}_{1}$ to the vector $\underline{e}'= (0,1,1, \cdots)^{T}$ which coincides with $\underline{1}_{N}$ on $\mathbf{A'}$ and is orthogonal to $\underline{e}_{1}$. Moreover, it is clear from Eq.(\ref{eq:A'reg}) that:
\begin{equation}
\mathbf{A} \underline{e}' \ = \ N \underline{e}_{1} \ + \ m \underline{e}'
\end{equation}
Therefore, calling $\underline{e}_{E} = \frac{1}{\sqrt{N}} \underline{e}'$ the normalized associated with $\underline{e}'$, $\mathbf{A}$ reduces to a simple $2 \times 2$ block in the space spanned by $\{ \underline{e}_{1}, \underline{e}_{E} \}$ \cite{qwreduct}:
\begin{equation}
    \mathbf{A}_{\mathrm{red}} = \left( \begin{array}{cc} 0 & \sqrt{N} \\
    \sqrt{N} & m \end{array} \right)
\end{equation}
This is easily diagonalized, and the resulting return probability starting from $\underline{e}_{1}$ is given by:
\begin{equation}
\label{eq:univadjreg}
    p^{\mathbf{A}}_{1 \to 1} (t) \ = \ 1 -\frac{2 N}{m^2 + 4 N} \left[ 1 - \cos( \sqrt{m^2 + 4 N} t) \right]
\end{equation}
whereas, with probability $1 - p^{\mathbf{A}}_{1 \to 1}(t)$ the walker will be in the state $\underline{e}_{E}$ orthogonal to $\underline{e}_{1}$, from which all the other site-to-site transition probabilities follow.\\

Eq.(\ref{eq:univadjreg}) can be regarded as a partial result on the adjacency version of the general theorem (\ref{eq:Lthm}) for Laplacian quantum walks. It states that the dynamics starting from the apex of a cone over \emph{any} $m$-regular graph is fully specified by $m$ and the number of vertices of the underlying graph $N$ (notice that the full graph has $N+1$ vertices instead). If we define a family of regular graphs such that the degree scales as $m \sim O(N^{\textcolor{red}{\frac{1}{2}}+ \alpha})$ with $\alpha >0$, it follows that $p^{A}_{1 \to 1}$ stays very close to $1$ at all times, for large enough $N$. This construction leads to a result originally presented in \cite{godsilsedent}. \\

\section{Swift chiral quantum walks}
Now suppose that we can attach complex phases to the edges of the $m$-regular graph $\mathcal{G}'$ such that the resulting chiral adjacency matrix $\tilde{\mathbf{A}}$, which would substitute $\mathbf{A'}$ in Eq.(\ref{eq:AMgraph}), fulfils the following condition:
\begin{equation}
\label{eq:condphases}
    \tilde{\mathbf{A}} \underline{1}_{N} \ = \ 0
\end{equation}
that is to say, the phases conjure in such a way that the sum of each row of $\tilde{\mathbf{A}}$ is zero. Retracing the steps to arrive at Eq.(\ref{eq:univadjreg}), we find the following reduced Hamiltonian for the chiral QW on the cone in the subspace spanned by $\{ \underline{e}_{1} ,\underline{e}_{E} \}$:
\begin{equation}
\label{eq:Hswift}
     \mathbf{H}_{\mathrm{red}} = \left( \begin{array}{cc} 0 & \sqrt{N} \\
    \sqrt{N} & 0 \end{array} \right)
\end{equation}
from which the return probability to the starting, maximally connected vertex is calculated to be:
\begin{equation}
\label{eq:swift}
     p^{\tilde{\mathbf{A}}}_{1 \to 1} (t) \ = \  \cos^2 ( \sqrt{N} t)
\end{equation}
Notice that this is the same return probability that would result from the adjacency quantum walk starting from the center of an $N$-pointed star graph, which cannot support chiral quantum walks since it is a tree graph. This equation motivates the following definition: \\

\begin{definition}
Let $\mathcal{G} = ( \mathcal{V} , \mathcal{E} )$ be a simple, connected and undirected graph, let $v \in \mathcal{V}$ be a vertex of $\mathcal{G}$ and denote by $N$ the cardinality of the set of neighbours of $v$ in $\mathcal{G}$. 
Given a chiral Hamiltonian $\mathbf{H}$ on $\mathcal{G}$, it is said to generate a \emph{\textbf{swift chiral quantum walk}} starting from \textcolor{red}{$v$ in} $\mathcal{G}$ if:
\begin{equation}
\label{eq:condswift}
   \forall t \geq 0 : \ \  \vert \underline{e}_{v}^\dagger  e^{ - i \mathbf{H} t } \underline{e}_{v} \vert^{2} \ = \ \cos^{2} \left( \sqrt{N} t \right)
\end{equation}
\end{definition}

\par{ Therefore, whenever it is possible to attach phases to the edges such that Eq.(\ref{eq:condphases}) is fulfilled, we obtain a similar result to the one which is valid for the Laplacian quantum walks, in the sense that all the adjacency-type \emph{swift} chiral quantum walks starting from a maximally connected vertex have the same dynamics as that of the adjacency quantum walk on the star graph. Notice, however, that Laplacian quantum walks tend to stay on the initial vertex, if this has maximal degree, whereas swift quantum walks abandon it fast. In the following, we will formalize this observation and the nomenclature of \emph{swift chiral quantum walks} with rigorous and general results.}

\par{\textbf{\emph{Remark.}} If $\mathcal{G}'$ has a vertex of degree $1$, then Eq.(\ref{eq:condphases}) cannot be fulfilled by any assignment of phases on its edges.   }\\

\par
{\begin{theorem}
Let $\mathcal{G}$ be a graph and let $v$ be a vertex of $\mathcal{G}$ with degree $N$. Out of all the chiral quantum walks starting from $v$ in $\mathcal{G}$ (with generic diagonal entries of the chiral Hamiltonian), a \emph{swift chiral quantum walk}, when it exists, is the fastest at leaving $v$, in the sense that it evolves to a state orthogonal to $v$ in a time $\tau_{s} = \frac{\pi}{2 \sqrt{N}}$ which achieves the \textbf{shortest} quantum speed limit among all chiral evolutions starting from $v$ on $\mathcal{G}$.
\end{theorem}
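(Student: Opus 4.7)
The plan is to combine the explicit return probability in Eq.~(\ref{eq:swift}) with the Mandelstam--Tamm version of the quantum speed limit. The first half of the statement is immediate: a swift chiral walk has $p^{\tilde{\mathbf{A}}}_{v \to v}(t) = \cos^{2}(\sqrt{N}\,t)$, which first vanishes at $\tau_s = \frac{\pi}{2\sqrt{N}}$, so $e^{-i\mathbf{H}\tau_s}\underline{e}_v$ is orthogonal to $\underline{e}_v$. The substantive half is to show that no chiral Hamiltonian on $\mathcal{G}$ starting from $v$ can reach an orthogonal state strictly earlier.

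For this, I would invoke the Mandelstam--Tamm bound: for any Hermitian $\mathbf{H}$ and any unit vector $\underline{\psi}$, the first orthogonality time satisfies $\tau \geq \frac{\pi}{2\,\Delta_\psi \mathbf{H}}$, with $\Delta_\psi\mathbf{H} = \sqrt{\underline{\psi}^\dagger\mathbf{H}^2\underline{\psi} - (\underline{\psi}^\dagger\mathbf{H}\underline{\psi})^2}$. The entire argument then reduces to showing that $\Delta_{e_v}\mathbf{H} = \sqrt{N}$ for every chiral Hamiltonian $\mathbf{H} = \tilde{\mathbf{D}} + \tilde{\mathbf{A}}$ on $\mathcal{G}$, regardless of the phases in $\tilde{\mathbf{A}}$ and of the diagonal entries in $\tilde{\mathbf{D}}$.

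This is a one-line calculation: $\underline{e}_v^\dagger\mathbf{H}\underline{e}_v = \tilde{\mathbf{D}}_{vv}\in \reals$, while
\begin{equation}
\underline{e}_v^\dagger \mathbf{H}^2 \underline{e}_v \ = \ \sum_{k} |\mathbf{H}_{vk}|^2 \ = \ \tilde{\mathbf{D}}_{vv}^{2} + \sum_{k \sim v} |\tilde{\mathbf{A}}_{vk}|^{2} \ = \ \tilde{\mathbf{D}}_{vv}^{2} + N,
\end{equation}
since each of the $N$ entries of $\tilde{\mathbf{A}}$ incident to $v$ has unit modulus by definition of a chiral adjacency matrix and the remaining off-diagonal entries of row $v$ vanish. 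Subtracting the square of the mean yields $\Delta_{e_v}\mathbf{H}^{2} = N$, so $\tau \geq \pi/(2\sqrt{N}) = \tau_s$ holds universally for chiral walks on $\mathcal{G}$ starting from $v$, which is exactly the required optimality statement.

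The only delicate point I anticipate is the clean bookkeeping showing that $\tilde{\mathbf{D}}$ contributes nothing to the variance, since the diagonal shifts $\langle \mathbf{H}\rangle$ and $\langle \mathbf{H}^2\rangle$ by amounts that cancel; and checking saturation of Mandelstam--Tamm by the swift walk. The latter is immediate from the reduced block in~(\ref{eq:Hswift}), whose spectrum $\pm\sqrt{N}$ is symmetric about zero and on which $\underline{e}_v$ decomposes as an equally weighted superposition of the two eigenvectors---the textbook configuration under which the Mandelstam--Tamm inequality is tight.
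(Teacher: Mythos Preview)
Your proof is correct and follows the same core argument as the paper: both compute $\Delta_{e_v}\mathbf{H} = \sqrt{N}$ for every chiral Hamiltonian on $\mathcal{G}$ and invoke the Mandelstam--Tamm bound to conclude $\tau \geq \tau_s$, with saturation by the swift walk. The paper additionally brings in the Margolus--Levitin term $\frac{\pi}{2(\langle\mathbf{H}\rangle - E_0)}$ and then argues it can never yield a shorter speed limit than $\tau_s$, whereas your version bypasses this detour entirely---which is legitimate and arguably cleaner, since Mandelstam--Tamm alone already gives the universal lower bound $\tau_s$.
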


\begin{proof}
The speed limit to completely leave $v$ is the speed limit to go to a generic state orthogonal to $\underline{e}_{v}$. The generic formula of the quantum speed limit for orthogonal final state is \cite{QSL03,QSL03a,QSL04,QSL17}:
\begin{equation}
\label{eq:QSL1}
    \tau_{QSL}   :=  \max \left\{ \frac{  \pi }{ 2 \Delta \mathbf{H}}  \ , \ \frac{\pi }{ 2 ( \langle \mathbf{H} \rangle - E_{0} ) }        \right\} 
\end{equation}
where $\Delta \mathbf{H} = \sqrt{\langle \mathbf{H}^{2} \rangle - \langle \mathbf{H} \rangle^{2}}$ is the standard deviation of energy on the orbit of $\underline{e}_{v}$, $\langle \mathbf{H} \rangle = \underline{e}_{v}^{\dagger} \mathbf{H} \underline{e}_{v}$ its average energy and $E_{0}$ is the ground state's energy. Notice that the quantum speed limit is defined for a \emph{fixed} Hamiltonian, and we are looking for the \emph{smallest} quantum speed limit over all chiral Hamiltonians on $\mathcal{G}$.

But for any localized state $\vert v \rangle$ with degree $N$ and any chiral Hamiltonian $\mathbf{H}$ on $\mathcal{G}$, it is always true that $\Delta \mathbf{H} = \sqrt{N}$. Indeed, let $ \underline{e}_{v}^{\dagger} \mathbf{H} \underline{e}_{v} = d$, then $\underline{e}_{v}^{\dagger} \mathbf{H}^{2} \underline{e}_{v} = \underline{h}_{v}^{\dagger} \underline{h}_{v} = d^2 +  N$ where $\underline{h}_{v}$ is the $v$-th row of $\mathbf{H}$, whose $v$-th entry is $d$ and all the other non-zero entries are precisely $N$ complex phases. Therefore we can rewrite:
\begin{equation}
\label{eq:QSL2}
    \tau_{QSL}   :=  \max \left\{ \frac{  \pi }{ 2 \sqrt{N} } \ , \ \frac{\pi }{ 2 (  - E_{0} ) }        \right\}  \ = \  \max \left\{ \tau_{s} \ , \ \frac{\pi }{ 2 (  - E_{0} ) }        \right\}
\end{equation}
 To conclude the proof, let us suppose that there is a better chiral Hamiltonian for which the second expression of Eq.(\ref{eq:QSL2}) applies, which, because of the $\max$, implies that $\tau_{QSL} > \tau_{s}$. But since we supposed that a swift chiral quantum walk exists, we know that $\tau_{s}$ can be achieved by the swift chiral Hamiltonian, which shows that the second expression in Eq.(\ref{eq:QSL2}) is never the best quantum speed limit whenever a swift chiral quantum walk exists. As a byproduct, we deduce that for swift chiral quantum walks the ground state energy fulfils $E_{0} \leq - \sqrt{N}$.
\end{proof} }

\textbf{\emph{Remark.}} It is not immediate to conclude that any chiral quantum walk which achieves this limit $\tau_{s}$ must be a swift chiral quantum walk since, a priori, the specific form of the return probability need not be given by Eq.(\ref{eq:condswift}). \\

It is a very interesting question to ask when is it possible to construct a swift chiral quantum walk (of the adjacency type) starting from maximally connected vertices in any graph $\mathcal{G}$ having no vertex with degree $2$ or, equivalently, to find a chiral adjacency matrix fulfilling Eq.(\ref{eq:condphases}) on any graph $\mathcal{G}'$ having no vertex of degree $1$. We shall introduce the following definition: \\

\begin{definition}
\label{def:swiftconfig}
Let $\mathcal{G}'$ be a graph of order $N$ and suppose there exists a chiral Hamiltonian $\mathbf{H}$ on $\mathcal{G}'$ with eigenvector $\mathbf{1}_{N}$ and eigenvalue $0$ (where $\mathbf{1}_{N} \in \reals^{N}$ is the vector of all ones) or, equivalently, such that the sum of each row of $\mathbf{H}$ is zero. We will say that the phases assigned to the edges of $\mathcal{G}'$ by $\mathbf{H}$ provide a \textbf{\emph{swift phases configuration}} for $\mathcal{G}'$.
\end{definition}

We could not provide general criteria for a graph (with minimal degree $\geq 2$) to admit a swift phases configuration. Nevertheless, we shall prove that many infinite families of graphs admit such a construction. Missing proofs can be found in Appendix \ref{apx:proofs}. \\

\begin{theorem}
\label{thm:evenswift}
If all the vertices of $\mathcal{G}'$ have even degree, then it admits a swift phases configuration.
\end{theorem}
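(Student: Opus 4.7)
The plan is to exploit a classical graph-theoretic fact: a finite graph has every vertex of even degree if and only if it admits an \emph{Eulerian orientation}, that is, an orientation of its edges such that at each vertex the in-degree equals the out-degree. For a connected such graph this is immediate from Euler's theorem, since traversing an Eulerian circuit and orienting each edge in the direction of traversal yields such an orientation; for a graph that is not assumed connected, apply the same construction component by component (isolated vertices are vacuous).

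Given any Eulerian orientation of $\mathcal{G}'$, I would build the chiral adjacency matrix by assigning the phase $\pi/2$ to every edge \emph{in the direction it is oriented}. Concretely, for each edge $\{j,k\} \in \mathcal{E}(\mathcal{G}')$ oriented $j \to k$, set
\begin{equation}
\tilde{\mathbf{A}}_{jk} \ = \ i \ = \ e^{i \pi/2} , \qquad \tilde{\mathbf{A}}_{kj} \ = \ -i \ = \ e^{-i \pi/2} ,
\end{equation}
and set $\tilde{\mathbf{A}}_{jk} = 0$ on non-edges. The matrix is Hermitian by construction and its nonzero entries are unit-modulus complex numbers, so it is a legitimate chiral adjacency matrix on $\mathcal{G}'$ in the sense of the paper.

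It then remains to check Eq.~(\ref{eq:condphases}). Fix an arbitrary vertex $v$ of degree $2 d_v$. By the definition of an Eulerian orientation, exactly $d_v$ of the edges at $v$ are oriented outward, each contributing $+i$ to the $v$-th row of $\tilde{\mathbf{A}}$, and exactly $d_v$ are oriented inward, each contributing $-i$. Hence the $v$-th row sum equals $d_v \cdot i + d_v \cdot (-i) = 0$, and since $v$ was arbitrary one obtains $\tilde{\mathbf{A}} \underline{1}_{N} = 0$. This is exactly the swift phases condition of Definition~\ref{def:swiftconfig}.

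The only non-trivial ingredient is the first step, namely the existence of the balanced orientation; once Euler's theorem is invoked, the rest is a one-line cancellation between $+i$ and $-i$ contributions. So I do not anticipate a real obstacle; the argument is essentially a translation of ``even degree $\Leftrightarrow$ Eulerian'' into a statement about unit-modulus edge weights.
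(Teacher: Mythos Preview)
Your proof is correct and is essentially the same as the paper's: both assign $+i$/$-i$ to the edges of an orientation in which in-degree equals out-degree at every vertex, so that the row sums cancel. The paper's written proof obtains this balanced orientation by a direct edge-walk argument (effectively reproving Euler's theorem) and then explicitly remarks that invoking Euler's theorem, as you do, gives a shorter route.
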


\vspace{2mm}

\begin{theorem}
\label{thm:cubicswift}
A $3$-regular graph $\mathcal{G}'$ admits a swift phases configuration if and only if one of the following equivalent conditions is met:
\begin{enumerate}
    \item $\mathcal{G}'$ has a perfect matching 
    \item $\mathcal{G}'$ can be covered with vertex disjoint cycles (it has a $2$-factor)
\end{enumerate}
\end{theorem}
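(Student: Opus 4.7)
My plan is to handle the three assertions of the theorem in turn. The equivalence of (1) and (2) for cubic graphs is classical: if $M$ is a perfect matching then $\mathcal{G}' \setminus M$ is $2$-regular, hence a disjoint union of cycles (a $2$-factor); conversely the complement of a $2$-factor in a cubic graph is a $1$-regular spanning subgraph, i.e.\ a perfect matching. Thus it suffices to prove that a swift phases configuration exists on $\mathcal{G}'$ if and only if condition (2) holds.

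For the implication (2) $\Rightarrow$ swift I would exhibit the configuration explicitly. Given a $2$-factor $F$ with complementary perfect matching $M = E(\mathcal{G}') \setminus F$, I would orient each cycle of $F$ arbitrarily and set $\tilde{\mathbf{A}}_{uv} = \omega := e^{2\pi i/3}$ whenever $v \to u$ is a forward edge in the orientation (so that $\tilde{\mathbf{A}}_{vu} = \bar\omega$ by hermiticity), and $\tilde{\mathbf{A}}_{uv} = 1$ on every edge of $M$. At any vertex the three row contributions are then $\omega + \bar\omega + 1 = 0$, verifying the swift condition of Definition~\ref{def:swiftconfig}.

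The converse (swift $\Rightarrow$ (1)) is the more delicate direction. My starting point would be the elementary fact that three unit complex numbers summing to zero are a rotated copy of the three cube roots of unity: they equal $\{e^{i\alpha}, e^{i\alpha}\omega, e^{i\alpha}\omega^2\}$ for some $\alpha$. Applied at each vertex $v$, this implies that the cube $\gamma_v := \tilde{\mathbf{A}}_{vw}^3$ is independent of the neighbor $w$, and hermiticity forces the edge relation $\gamma_v \gamma_w = 1$. Traversing any odd cycle of $\mathcal{G}'$ then forces $\gamma_v^2 = 1$, and combined with the edge relation and connectivity this upgrades to $\gamma_v \equiv c \in \{+1,-1\}$ constant on each non-bipartite connected component. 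Since the three phases at $v$ are the three cube roots of $c = \pm 1$, exactly one of them equals the real value $c$; by hermiticity the edge at $v$ on which this real phase is attained carries the same real phase at its other endpoint too, and by uniqueness of the real cube root this identifies a canonical partner. The edges carrying the real phase therefore constitute a perfect matching of each non-bipartite component. Bipartite components are handled for free by K\"onig's theorem on regular bipartite graphs, and taking the union over all components yields a perfect matching of $\mathcal{G}'$.

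The main conceptual obstacle is discovering the vertex invariant $\gamma_v$: once one observes that cubing the phases kills the $\mathbb{Z}/3$ ambiguity in the local reference angle $\alpha_v$ and that the swift condition plus hermiticity collapse on edges to the tight multiplicative relation $\gamma_v \gamma_w = 1$, the structural consequences propagate easily on non-bipartite components. The split between bipartite components (handled by K\"onig) and non-bipartite ones (handled by the cube-root--matching argument) is the organizing principle I would expect the appendix proof to follow.
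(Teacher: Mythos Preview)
Your proof is correct. The sufficiency construction and the equivalence $(1)\Leftrightarrow(2)$ match the paper verbatim, and your necessity argument is sound: the invariant $\gamma_v=\tilde{\mathbf{A}}_{vw}^{3}$ is well-defined, the edge relation $\gamma_v\gamma_w=1$ follows from hermiticity, odd cycles pin $\gamma_v\in\{\pm1\}$ on non-bipartite components, and the unique real cube root singles out a perfect matching there; bipartite components are disposed of by K\"onig's theorem.

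The paper's necessity argument starts from the same observation (three unit numbers summing to zero form a rotated equilateral triangle) but is organised differently. Instead of your vertex invariant $\gamma_v$, it propagates the \emph{set} $S=\{e^{i\phi},e^{i\phi}\omega,e^{i\phi}\omega^{2}\}$ of admissible row phases from a seed edge and observes that adjacent vertices alternate between $S$ and $\bar S$. The case split is then $S=\bar S$ (forcing $S\in\{\{1,\omega,\omega^{2}\},\{-1,-\omega,-\omega^{2}\}\}$, where the real phase picks out the matching --- this is exactly your $\gamma_v=\pm1$ case) versus $S\cap\bar S=\varnothing$, where the edges carrying $e^{\pm i\phi}$ again form a matching directly from the phase data. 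In particular the paper never invokes K\"onig: it extracts the matching constructively from the swift configuration itself even on bipartite components. Your appeal to K\"onig is a legitimate shortcut, but it means your argument is non-constructive precisely where the paper's remains constructive; conversely, your $\gamma_v$ formalism makes the propagation step cleaner and more transparently an ``obstruction on odd cycles'' than the paper's set-alternation language.
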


We recall that a perfect matching of $\mathcal{G}'$ is a $1$-factor, i.e. a partition of all the vertices in pairs such that the vertices in each pair are adjacent in $\mathcal{G}'$ and each vertex appears in just a single pair. Moreover, we say that $\mathcal{G}'$ can be covered with vertex disjoint cycles if there exists a set of cycles that are subgraphs of $\mathcal{G}'$, whose union covers all the vertices of $\mathcal{G}'$ and such that each vertex of $\mathcal{G}'$ belongs to a single cycle. This is also called a $2$-factor of $\mathcal{G}'$. Notice that Theorem \ref{thm:cubicswift} also implies that there are regular graphs not admitting a swift phases configuration: an example is provided by the graph in Fig. \ref{fig:cubicnomatch}.

\begin{figure}[h!]
     \centering
     \includegraphics[scale=0.35]{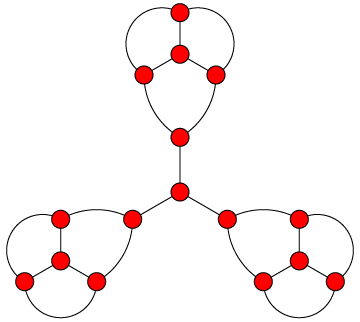}
     \caption{A $3$-regular graph having no perfect matching and therefore no swift phases configuration.}
     \label{fig:cubicnomatch}
 \end{figure} 
 
 \vspace{2mm}
\begin{corol}
A bridgeless $3$-regular graph always admits a swift phases configuration.
\end{corol}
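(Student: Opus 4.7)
The plan is to reduce the corollary directly to Theorem \ref{thm:cubicswift} and then appeal to a classical theorem of Petersen. By Theorem \ref{thm:cubicswift}, for a $3$-regular graph $\mathcal{G}'$ the existence of a swift phases configuration is equivalent to the existence of a perfect matching (equivalently, a $2$-factor). So the entire content of the corollary is the graph-theoretic statement: every bridgeless cubic graph has a perfect matching.

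This is precisely Petersen's theorem (1891), one of the earliest results in matching theory, so the proof amounts to citing it. If one wants to make the argument self-contained, the standard route is via Tutte's theorem: for any $S \subseteq V(\mathcal{G}')$, count edges between $S$ and the odd components of $\mathcal{G}' \setminus S$. In a cubic graph each vertex contributes exactly $3$ edges, and a parity argument shows that each odd component $C$ must be incident to an odd number of edges going to $S$; bridgelessness forces this number to be at least $3$ (it cannot be $1$, for that edge would be a bridge). Summing $3 \cdot (\text{number of odd components})$ on one side and at most $3|S|$ on the other yields $o(\mathcal{G}' \setminus S) \leq |S|$, which is Tutte's condition and therefore guarantees a perfect matching.

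The main (and essentially only) step is recognizing that Theorem \ref{thm:cubicswift} has already done the nontrivial work of translating the quantum-walk condition into a matching condition; from there, invoking Petersen's theorem is immediate. There is no genuine obstacle — the only subtlety to mention in the write-up is that the equivalence between a perfect matching and a $2$-factor in a cubic graph (used in Theorem \ref{thm:cubicswift}) is itself elementary: removing a perfect matching from a cubic graph leaves a $2$-regular graph, which is a disjoint union of cycles and hence a $2$-factor.
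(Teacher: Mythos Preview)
Your proposal is correct and follows essentially the same approach as the paper: reduce to Theorem~\ref{thm:cubicswift} and then invoke Petersen's theorem to guarantee a perfect matching in any bridgeless cubic graph. The only difference is that you additionally sketch the Tutte-based proof of Petersen's theorem, which is extra detail rather than a different route.
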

\begin{proof}
Recall that a graph is $k$-edge connected if $k$ is the minimum number of edges to be deleted in order to make the graph disconnected. A non connected graph is bridgless if $k > 1$ for each connected component. Petersen theorem ensures that bridgless $3$-regular graphs admit perfect matching and the thesis follows from Therorem \ref{thm:cubicswift}. 
\end{proof}

\textcolor{red}{Notice that the converse is not true: there exist $3$-regular graphs with perfect matching and with bridges. An example is provided in Fig. \ref{fig:cubicbridge}.}

\begin{figure}[h!]
     \centering
     \includegraphics[scale=0.5]{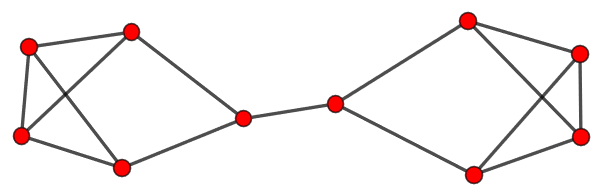}
     \caption{A $3$-regular graph with a bridge and a perfect matching, thus admitting a swift phases configuration.}
     \label{fig:cubicbridge}
 \end{figure}  
 
 \vspace{5mm}
\begin{corol}
An hamiltonian $3$-regular graph always admits a swift phases configuration.
\end{corol}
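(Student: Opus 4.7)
The plan is to reduce the statement to Theorem \ref{thm:cubicswift} by exhibiting either a perfect matching or a $2$-factor in any Hamiltonian $3$-regular graph, using the Hamiltonian cycle itself as the key structural object.

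First, I would invoke the hypothesis that $\mathcal{G}'$ is $3$-regular and Hamiltonian, so it admits a cycle $C$ passing through each vertex exactly once. Since $\mathcal{G}'$ is $3$-regular, the handshake lemma forces $|\mathcal{V}(\mathcal{G}')|$ to be even, and hence $C$ has even length. This is the only arithmetic fact I would need, and it is immediate.

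Next, I would observe that the Hamiltonian cycle $C$ is, by definition, a spanning $2$-regular subgraph of $\mathcal{G}'$ consisting of vertex-disjoint cycles (in fact a single cycle). This is precisely condition (2) in Theorem \ref{thm:cubicswift}. Equivalently, one could extract a perfect matching from $C$ by selecting every other edge along the cycle, which works because the cycle has even length; this realises condition (1) instead. Either way, one of the equivalent hypotheses of Theorem \ref{thm:cubicswift} is satisfied.

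Finally, I would conclude by directly applying Theorem \ref{thm:cubicswift} to obtain that $\mathcal{G}'$ admits a swift phases configuration. There is no real obstacle here: the whole content is packaged inside Theorem \ref{thm:cubicswift}, and the corollary amounts to the observation that a Hamiltonian cycle is a particularly transparent witness of a $2$-factor. If anything, the only subtlety to flag is the parity argument ensuring that the Hamiltonian cycle has even length, which is automatic for $3$-regular graphs.
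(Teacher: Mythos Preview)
Your proposal is correct and follows essentially the same route as the paper: the paper simply observes that a Hamiltonian cycle is itself a covering by vertex-disjoint cycles (condition (2) of Theorem~\ref{thm:cubicswift}) and invokes that theorem directly. Your additional parity argument extracting a perfect matching is sound but unnecessary, since condition (2) is already met by the Hamiltonian cycle without any appeal to the evenness of $|\mathcal{V}(\mathcal{G}')|$.
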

\begin{proof}
A graph $\mathcal{G}'$ is said to be hamiltonian if it admits a hamiltonian cycle, a closed path passing through each vertex of $\mathcal{G}'$ exactly once. The thesis is then a direct consequence of Theorem \ref{thm:cubicswift} 
\end{proof}

\begin{theorem}
\label{thm:completeswift}
Complete graphs $K_{N}$ and bipartite complete graphs $K_{N_{1}, N_{2}}$ admit a swift phases configuration for all integers $N \geq 3$ and $N_{1}, N_{2} \geq 2$.  
\end{theorem}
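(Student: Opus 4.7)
The plan is to treat the two families separately by writing down explicit chiral adjacency matrices whose rows sum to zero, giving a swift phases configuration in the sense of Definition \ref{def:swiftconfig}.

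For $K_{N_1,N_2}$, label the two parts as $\{u_1,\dots,u_{N_1}\}$ and $\{w_1,\dots,w_{N_2}\}$. Setting $\omega_i = e^{2\pi i/N_i}$, I would assign the phase $\omega_1^{j-1}\omega_2^{k-1}$ to the oriented edge from $u_j$ to $w_k$ and its complex conjugate to the reversed edge, i.e.\ use the rank-one cross-block $B = u v^{T}$ with $u_j = \omega_1^{j-1}$ and $v_k = \omega_2^{k-1}$. All entries have unit modulus. Row sums on the $u$-side factor as $u_j\sum_k v_k = 0$ and on the $w$-side as $\overline{v_k}\sum_j \overline{u_j} = 0$, both vanishing because $N_1,N_2 \geq 2$ ensures $\omega_1, \omega_2 \neq 1$.

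For $K_N$, I would split on parity. When $N \geq 3$ is odd, every vertex has even degree $N-1$, so Theorem \ref{thm:evenswift} immediately yields a swift phases configuration. When $N \geq 4$ is even, I would use a circulant assignment $\tilde{\mathbf{A}}_{jk} = f((j-k)\bmod N)$ with $f(0)=0$, $|f(l)|=1$ for $l \neq 0$, and $f(N-l)=\overline{f(l)}$ (forced by Hermiticity); by circulant symmetry every row sums to $\sum_{l=1}^{N-1}f(l)$, so the task reduces to producing such an $f$ with zero total sum. Hermiticity forces $f(N/2)$ to be real, hence $f(N/2)\in\{-1,+1\}$. Picking $f(N/2)=1$, the remaining condition is $\sum_{l=1}^{N/2-1}\mathrm{Re}(f(l))=-\tfrac12$, which is satisfied by $f(1) = -\tfrac12 + i\tfrac{\sqrt{3}}{2}$ together with $f(l)=i$ for $2 \leq l \leq N/2-1$ (the latter prescription being empty when $N=4$).

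The main obstacle I anticipate is really just the parity bookkeeping for $K_N$: for odd $N$ there is no self-conjugate middle index and the circulant construction is essentially free, but for even $N$ the real-valued coefficient at $l=N/2$ contributes a forced $\pm 1$ that must be absorbed by tuning the real parts of the $N/2-1$ other free indices. The inequality $N/2-1 \geq 1$, i.e.\ $N \geq 4$, is exactly what makes this compensation achievable with unit-modulus entries, matching (together with the odd case) the hypothesis $N \geq 3$ of the theorem.
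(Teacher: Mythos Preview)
Your proposal is correct. For $K_{N_1,N_2}$ your rank-one cross block $B = uv^{T}$ with $u_j = e^{2\pi i (j-1)/N_1}$ and $v_k = e^{2\pi i (k-1)/N_2}$ is exactly the paper's construction up to an immaterial index shift, and the row-sum check via geometric series of nontrivial roots of unity is the same argument the paper gives.

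For $K_N$ the paper does not supply a self-contained proof: it simply cites an earlier reference for an explicit construction. Your argument is therefore more explicit than the paper's and takes a somewhat different route. Invoking Theorem~\ref{thm:evenswift} for odd $N$ (where every vertex has even degree $N-1$) is a clean internal shortcut. For even $N$ your circulant ansatz $\tilde{\mathbf{A}}_{jk}=f((j-k)\bmod N)$ is a nice elementary construction: Hermiticity pairs $f(l)$ with $\overline{f(N-l)}$ and forces $f(N/2)\in\{\pm 1\}$, the common row sum reduces to $f(N/2)+2\sum_{l=1}^{N/2-1}\mathrm{Re}\,f(l)$, and your choice $f(N/2)=1$, $f(1)=e^{2\pi i/3}$, $f(l)=i$ for $2\le l\le N/2-1$ solves it with unit-modulus values for all $N\ge 4$. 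The parity obstacle you anticipated is handled correctly, and the endpoint $N=4$ works as written.
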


\vspace{4mm}
\begin{lema}
Let $\mathcal{G}_{1}$ and $\mathcal{G}_{2}$ be two graphs such that a cone with apex $v$ on $\mathcal{G}_{1}$ (resp. $\mathcal{G}_{2}$) admits a swift chiral quantum walk starting from $v$. Then also the cone over their disjoint union $\mathcal{G} = \mathcal{G}_{1} \bigcup \mathcal{G}_{2}$ admits a swift chiral quantum walk starting from its apex. 
\end{lema}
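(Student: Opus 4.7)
The plan is to convert the hypotheses into statements about swift phases configurations on $\mathcal{G}_1$ and $\mathcal{G}_2$ and then use a block diagonal construction. Recall from the discussion around Eq.(\ref{eq:condphases})--Eq.(\ref{eq:swift}) that the cone with apex $v$ over a graph $\mathcal{G}'$ of order $N'$ admits a swift chiral quantum walk (of the adjacency type) starting from $v$ precisely when $\mathcal{G}'$ carries a chiral adjacency matrix $\tilde{\mathbf{A}}'$ with $\tilde{\mathbf{A}}' \underline{1}_{N'} = 0$, i.e.\ a swift phases configuration in the sense of Definition \ref{def:swiftconfig}. So the first step is to rephrase the assumption as the existence of chiral adjacency matrices $\tilde{\mathbf{A}}_{1}$ on $\mathcal{G}_{1}$ (of order $N_{1}$) and $\tilde{\mathbf{A}}_{2}$ on $\mathcal{G}_{2}$ (of order $N_{2}$) satisfying $\tilde{\mathbf{A}}_{i} \underline{1}_{N_{i}} = 0$ for $i = 1, 2$.

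Next I would label the vertices of $\mathcal{G} = \mathcal{G}_{1} \bigcup \mathcal{G}_{2}$ so that those of $\mathcal{G}_{1}$ come first, and define the block diagonal matrix
\begin{equation}
\tilde{\mathbf{A}} \ = \ \left( \begin{array}{cc} \tilde{\mathbf{A}}_{1} & 0 \\ 0 & \tilde{\mathbf{A}}_{2} \end{array} \right).
\end{equation}
The crucial structural point, and essentially the only nontrivial observation in the argument, is that because the union is \emph{disjoint} there are no edges between the two parts, so the zero off-diagonal blocks are compatible with the edge set of $\mathcal{G}$. Thus $\tilde{\mathbf{A}}$ is a bona fide chiral adjacency matrix on $\mathcal{G}$: its nonzero entries are unit-modulus phases exactly at the edges of $\mathcal{G}_{1}$ or $\mathcal{G}_{2}$, and hermiticity is inherited from the blocks.

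The verification of the swift condition on $\mathcal{G}$ is then immediate: writing $\underline{1}_{N_{1}+N_{2}} = (\underline{1}_{N_{1}}^{T}, \underline{1}_{N_{2}}^{T})^{T}$, block multiplication gives $\tilde{\mathbf{A}} \underline{1}_{N_{1}+N_{2}} = 0$. Hence $\tilde{\mathbf{A}}$ is a swift phases configuration for $\mathcal{G}$, and by the cone construction leading to Eq.(\ref{eq:Hswift}), the apex (which is connected to all $N_{1}+N_{2}$ vertices) admits the reduced two-level dynamics with off-diagonal entry $\sqrt{N_{1}+N_{2}}$, producing the return probability $\cos^{2}(\sqrt{N_{1}+N_{2}}\,t)$ required by Eq.(\ref{eq:condswift}).

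I do not expect a real obstacle here: the proof is essentially a one-line block diagonal argument. The only point that actually uses something about the graphs is that the union is disjoint, so that the zero off-diagonal blocks do not correspond to any missing edge phase assignments. Were the two components instead glued by additional edges, this construction would fail and one would need to compensate by choosing phases on the new edges so that the row sums remain zero, which is no longer automatic.
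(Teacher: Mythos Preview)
Your argument is correct. The paper actually states this lemma without proof, so there is no ``paper's own proof'' to compare against; the block-diagonal construction you give is the natural one and is certainly what the authors have in mind, since it is the direct translation of Definition~\ref{def:swiftconfig} to a disjoint union. The only point worth flagging is that your ``precisely when'' in the first paragraph (i.e.\ that a swift walk from the apex forces a swift phases configuration on the base, up to gauge) is justified by the gauge-fixing discussion and the necessity half of the theorem that follows the lemma in the paper; you are using that direction to pass from the hypothesis to the existence of $\tilde{\mathbf{A}}_1,\tilde{\mathbf{A}}_2$, so it is good that you made this explicit.
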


\vspace{4mm}
\section{Gauge fixing and uniqueness} 
\par{A quasi-gauge transformation for a chiral quantum walk \cite{cqw1,cqw2} is a unitary transformation $\mathbf{U}_{g}$ that is diagonal in the vertex basis and acts on chiral Hamiltonians $\mathbf{H}$ as $\mathbf{H}' = \mathbf{U}_{g} \mathbf{H} \mathbf{U}_{g}^{\dagger}$. They have the important property of preserving site-to-site transport probabilities \cite{TRBIAM21,frigerio21a}. Therefore, as far as we focus on these quantities, the set of chiral Hamiltonians on a graph $\mathcal{G}$ can be partitioned into equivalence classes, such that any two matrices belonging to the same class are linked by a quasi-gauge transformation. Whenever $\mathcal{G}$ is a tree graph (i.e. it has no loops), all chiral adjacency matrices on it belong to the same equivalence class, thus they generate site-to-site transport probabilities identical to those generated by the standard adjacency matrix of the graph (and similarly for chiral Laplacians and, more generically, chiral Hamiltonians for a fixed choice of diagonal elements). In contrast, as soon as $\mathcal{G}$ has at least one loop, there will be an uncountable number of distinct equivalence classes, which can be indexed by the sums of the phases along each loop of the graph, for an initially fixed orientation.} \\

\par{In the light of these observations and before discussing the uniqueness of the swift chiral quantum walk construction on a given graph and for a given starting vertex, it is useful to fix a gauge. Indeed, since quasi-gauge transformations do not affect site-to-site transport probabilities, they will transform a swift chiral quantum walk into another one, without modifying the defining condition Eq.(\ref{eq:swift}).
Let us consider the case of a cone $\mathcal{G}$, with apex labelled by $1$, over a graph $\mathcal{G}'$ of order $N$. The first column $\underline{h}_{1}$ of a chiral Hamiltonian $\tilde{\mathbf{H}}$ on $\mathcal{G}$ will have the following structure $\underline{h}_{1} = ( d_{1} , \underline{h'}_{1} )^{T}$ where $d_{1} \in \reals$ and $\underline{h}'_{1} \in \complex^{N}$ is a vector whose entries are all complex numbers of modulus 1. Then, a gauge-fixing can be imposed by asking that $\underline{h'}_{1} = (1,1,...,1)^{T} \in \reals^{N}$; indeed, in any class of gauge-equivalent chiral Hamiltonians, there will be just one representative fulfilling this condition. With this gauge choice, a chiral adjacency matrix $\tilde{\mathbf{A}}$ on $\mathcal{G}$ will generate a swift chiral quantum walk from $1$ if and only if the corresponding submatrix $\tilde{\mathbf{A}}'$ (which is a chiral adjacency matrix on $\mathcal{G}'$) is such that the sum of each of its rows (and therefore of each column) is zero. }

\vspace{2mm}
\begin{theorem}
Let $\mathcal{G} = ( \mathcal{V} , \mathcal{E} )$ be a simple, connected and undirected graph with $\vert \mathcal{V} \vert = N$ vertices. Let $v \in \mathcal{V}$ be a generic vertex of $\mathcal{G}$ and denote $\mathcal{G}_{v}$ the subgraph of $\mathcal{G}$ obtained by keeping $v$ and all its neighbours, and deleting every other vertex and the respective edges. Then there exists a \emph{swift chiral quantum walk} (of the adjacency type) starting from $v$ in $\mathcal{G}$ if and only if the following conditions hold:
\begin{enumerate}
    \item[\label{cond:a}(a)] There is a swift chiral quantum walk starting from $v$ in the cone $\mathcal{G}_{v}$
    \item[\label{cond:b}(b)] Any vertex $v' \in \mathcal{G}$ which is not a neighbour of $v$ has either $0$ or $\geq 2$ common neighbours with $v$. 
\end{enumerate}
\end{theorem}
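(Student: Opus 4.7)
The plan is to reduce the problem to the kind of two-dimensional invariant subspace analysis used in Section 2, by first adopting the gauge from the paragraph preceding the theorem so that every edge incident to $v$ carries phase $1$. In this gauge, $\tilde{\mathbf{A}} \underline{e}_v = \sqrt{N} \, \underline{e}_E$ where $\underline{e}_E := \tfrac{1}{\sqrt{N}} \sum_{u \in N(v)} \underline{e}_u$, independently of the remaining phases. The whole theorem then boils down to characterizing when one can further arrange $\tilde{\mathbf{A}} \underline{e}_E = \sqrt{N} \, \underline{e}_v$, because the pair of identities exactly reproduces the reduced Hamiltonian $\mathbf{H}_{\mathrm{red}}$ of Eq.(\ref{eq:Hswift}) and hence the return probability of Eq.(\ref{eq:swift}).

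For the forward implication, I would begin from $|A(t)|^2 = \cos^2(\sqrt{N} t)$ with $A(t) = \sum_j |c_j|^2 e^{-i \lambda_j t}$ and $c_j = \langle \lambda_j | \underline{e}_v \rangle$ over an eigenbasis of $\mathbf{H}$. Expanding $|A|^2$ as a sum of exponentials $e^{i(\lambda_k - \lambda_j) t}$ and using the linear independence of distinct exponentials, every pairwise difference of eigenvalues in the support of $\underline{e}_v$ must lie in $\{0, \pm 2\sqrt{N}\}$. A short combinatorial argument rules out arithmetic progressions of length three with common step $2\sqrt{N}$, so at most two distinct eigenvalues occur; matching coefficients forces them to be spaced by $2\sqrt{N}$ with equal weight $\tfrac{1}{2}$, and $\langle \mathbf{H} \rangle = 0$ for the adjacency type pins them to $\pm\sqrt{N}$. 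Thus $\{\underline{e}_v, \underline{e}_E\}$ spans a 2D invariant subspace with $\tilde{\mathbf{A}} \underline{e}_E = \sqrt{N}\, \underline{e}_v$. Unpacking this identity in the vertex basis, the $\underline{e}_v$-component is automatic, while for every $w \neq v$ the coefficient $\tfrac{1}{\sqrt{N}} \sum_{u \in N(v) \cap N(w)} e^{i \theta_{uw}}$ must vanish. If $w \in N(v)$, this is the row-sum-zero condition on the chiral adjacency of the induced subgraph $\mathcal{G}_v'$ on $N(v)$, equivalent to (a) by the gauge-fixing discussion applied to the cone $\mathcal{G}_v$. If $w \notin N(v) \cup \{v\}$, vanishing of a sum of $|N(v) \cap N(w)|$ unit complex numbers is impossible when this cardinality is exactly $1$, yielding (b).

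For the converse, I would build the chiral adjacency matrix explicitly: keep the gauge phases $1$ on edges at $v$; on edges within $\mathcal{G}_v'$, use the swift phases configuration guaranteed by (a); for each $w \notin N(v) \cup \{v\}$ with $k := |N(v) \cap N(w)| \geq 2$, assign to the $k$ edges $\{uw : u \in N(v) \cap N(w)\}$ any $k$ unit phases summing to zero, for instance the $k$-th roots of unity; on edges with both endpoints in $\mathcal{V} \setminus (N(v) \cup \{v\})$, assign arbitrary phases. These four edge families are pairwise disjoint, so the assignments are independent and hermiticity-compatible, and by construction they fulfil all the coefficient equations derived in the forward direction. A direct computation then gives $\tilde{\mathbf{A}} \underline{e}_v = \sqrt{N}\,\underline{e}_E$ and $\tilde{\mathbf{A}} \underline{e}_E = \sqrt{N}\,\underline{e}_v$, from which the swift return probability $\cos^2(\sqrt{N} t)$ follows.

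The main technical obstacle lies in the Fourier-style rigidity step of the forward direction: one must rigorously exclude the a priori possibility that $\underline{e}_v$ is spread over three or more eigenvalues of $\mathbf{H}$ in a way that still reproduces $\cos^2(\sqrt{N} t)$. Once this combinatorial enumeration of admissible spectral supports is in hand, the rest of the proof is a mechanical unpacking in the vertex basis together with the explicit phase construction sketched above.
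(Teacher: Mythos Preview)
Your proposal is correct and follows essentially the same route as the paper's proof: both directions hinge on showing that the swift condition is equivalent to the two-dimensional invariant-subspace identity $\tilde{\mathbf{A}}\underline{e}_E = \sqrt{N'}\,\underline{e}_v$ (in the gauge where edges at $v$ carry phase $1$), and then unpacking this identity row-by-row to obtain (a) from the rows indexed by neighbours of $v$ and (b) from the remaining rows; the sufficiency construction with roots of unity on the $\mathbf{B}$-block is identical to the paper's. The only substantive difference is that the paper simply asserts as ``clear'' that a $\cos^2(\sqrt{N'}t)$ return probability forces $\underline{e}_v$ to be supported on exactly two eigenspaces, whereas you supply the Fourier rigidity argument (linear independence of exponentials, exclusion of three-term arithmetic progressions, matching of weights) that actually establishes this; so what you flag as the ``main technical obstacle'' is precisely the step the paper leaves to the reader, and your treatment of it is sound.
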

\begin{proof}
\textbf{\emph{Sufficiency of conditions.}}
Let us call $N'$ the cardinality of the subset of $\mathcal{V}$ consisting of the neighbors of $v$, so that there are exactly $N- N' - 1$ vertices in $\mathcal{G}$ which are not connected to $v$. Let us enumerate with $1$ the vertex $v$, with progressive integers from $2$ to $N'+1$ the neighbours of $\mathcal{V}$ and with the remaining integers from $N'+2$ to $N$ the rest of the vertices. Then the adjacency matrix of $\mathcal{G}$ will be written in the following block form:

\begin{equation}
\label{eq:Av}
    \mathbf{A} \ = \ \left( \begin{array}{c|ccc|ccc}
    0  & 1 & \textcolor{red}{\cdots} &  \textcolor{red}{1} & 0 & \textcolor{red}{\cdots} & \textcolor{red}{0} \\
    \hline 
    1 & & & & \\
    \textcolor{red}{\vdots} & & \mathbf{A'} & & & \mathbf{B}^\mathrm{T} & \\ 
    \textcolor{red}{1} & & &  & \\
    \hline
    0 & & & &  & &  \\
    \textcolor{red}{\vdots} & & \mathbf{B} & & &   \mathbf{A''} &  \\
    \textcolor{red}{0} & & & &  & &  \\
    \end{array} \right)
\end{equation}
where $\mathbf{B}$ is an $(N-N'-1) \times N'$ matrix (whose entries are either $0$ or $1$ since $\mathbf{A}$ is an adjacency matrix) and $\mathbf{B}^\mathrm{T}$ is its transpose. 
Calling $\underline{e}_{1} = (1,0,0,0...)^T$ the vector associated with vertex $v$, we have:
\begin{equation}
    \mathbf{A} \underline{e}_{1} \ = \ \sqrt{N'} \underline{e}_{v} \ = \ 0 \oplus \mathbf{1}_{N'} \oplus \mathbf{0}_{N-N'-1} \in \complex \oplus \complex^{N'} \oplus \complex^{N-N'-1} \ \simeq \ \complex^{N}
\end{equation}
\textcolor{red}{where $\underline{e}_{v} = \frac{1}{\sqrt{N'}} ( 0, 1, \cdots, 1, 0, \cdots , 0)^{T}$}. By condition (a), we can attach complex phases to the nonzero entries of $\mathbf{A'}$ to get a chiral Hamiltonian $\mathbf{H'}$ reflecting the same graph topology of $\mathbf{A'}$, and such that $\mathbf{H'} \mathbf{1}_{N'} = 0$. As previously discussed, we can always assume a gauge choice that keeps the first row and column of $\mathbf{A}$ as in Eq.(\ref{eq:Av}): if $\mathcal{G}$ is not a cone, this will not fully fix the gauge since additional local phase changes can be made on the vertices not directly connected with $1$.  Moreover, letting $j > N'+1$, by condition (b) we can deduce that the $(j-N'-1)$th row of $\mathbf{B}$ contains either with $N'$ zeroes \emph{or} at least two entries of the row are $1$s. Therefore, by replacing the ones with the roots of unity whenever a row of $\mathbf{B}$ is not the null vector, we can always arrange a matrix of phases $\mathbf{X}$ whose nonzero entries are in the same place as those of $\mathbf{B}$ and which fulfils $\mathbf{X} \mathbf{1}_{N'} = 0$. Combining these facts, we can then define a chiral Hamiltonian:
\begin{equation}
\label{eq:Hv}
    \mathbf{H}_{v} \ = \ \left( \begin{array}{c|ccc|ccc}
    0  & 1 & 1 &  \cdots & 0 & 0 & \cdots \\
    \hline 
    1 & & & & \\
    1 & & \mathbf{H'} & & & \mathbf{X}^\dagger & \\ 
    \vdots & & &  & \\
    \hline
    0 & & & &  & &  \\
    0 & & \mathbf{X} & & &   \mathbf{A''} &  \\
    \vdots & & & &  & &  \\
    \end{array} \right)
\end{equation}
describing the same graph $\mathcal{G}$, which reduces to a $2 \times 2$ real block in the subspace generated by $\{ \underline{e}_{1}, \underline{e}_{v} \}$:
\begin{equation}
     \tilde{\mathbf{H}}_{v} = \left( \begin{array}{cc} 0 & \sqrt{N'} \\
    \sqrt{N'} & 0 \end{array} \right)
\end{equation}
and therefore the chiral dynamics generated by $\mathbf{H}_{v}$ and starting from $v$ is swift. \\

\textbf{\emph{Necessity of conditions.}}\\
Considering Eq.(\ref{eq:condswift}), it is clear that the returning probability can take the form of a squared cosine if and only if the initial state is a superposition of only two energy eigenstates, such that its dynamics is confined to a two-dimensional subspace of the full Hilbert space. Moreover, any chiral Hamiltonian $\mathbf{H}$ (of adjacency type) on the unweighted graph, upon acting on $\underline{e}_{1}$ will yield a flat state with support on the neighbours of vertex $1$ and orthogonal to it, coinciding with the first column $\underline{h}_{1}$ of $\mathbf{H}$. Let $\underline{e}_{E} = \frac{1}{\sqrt{N'}} \underline{h}_{1}$ be the normalized vector corresponding to $\underline{h}_{1}$; then we have:
\begin{equation}
    \begin{aligned}
    & \mathbf{H} \underline{e}_{1} \ = \ \sqrt{N'} \underline{e}_{E} \\
    & \mathbf{H} \underline{e}_{E} \ = \ \sqrt{N'} \underline{e}_{1} \ + \ \underline{e'}_{E} \\
    \end{aligned}
\end{equation}
where in the second equation $\underline{e'}_{E}$ is a generic vector orthogonal to $\underline{e}_{1}$ (and possibly with support also outside the neighbours of vertex $1$)\footnote{Notice that this is valid even if $\underline{e}_{E}$ is a vector with complex entries. }. But the conditions that the dynamics reduces to a two-dimensional subspace and that the return probability coincides with Eq.(\ref{eq:condswift}) immediately imply that $\underline{e'}_{E} = 0$, from which the two conditions of the theorem follow. 
\end{proof}


\section{No-go theorems for Laplacian quantum walks}
As was reminded in the introduction, continuous-time quantum walks generated by the Laplacian and starting from a vertex with maximal degree all have the same dynamics and they are sedentary. If we allow for an oracle, this behaviour can be easily fixed to get a swift (non-chiral) quantum walk. Indeed, let $\mathbf{L}$ be the Laplacian of a graph $\mathcal{G}$ which is a cone over some other graph $\mathcal{G}'$, let us label with $1$ the apex of the cone, with a corresponding state vector $\underline{e}_{1}$, and call $N$ its degree. Then if we define:
\begin{equation}
\label{eq:Loracle}
    \mathbf{\tilde{L}} \ := \  \mathbf{L} - \textcolor{red}{(N-1)} \mathbf{P}_{1}
\end{equation}
where $\mathbf{P}_{1}$ is the orthogonal projector on $\underline{e}_{1}$, it is straightforward to see that in the reduced basis generated from $\underline{e}_{1}$ the matrix $\mathbf{\tilde{L}}$ takes the same form of Eq.(\ref{eq:Hswift}), \textcolor{red}{apart for the addition of an immaterial identity matrix and a sign difference,} therefore the returning probability is that of a swift quantum walk, Eq.(\ref{eq:swift}). Notice that this is equivalent to Grover's Hamiltonian for quantum search on a graph, generalized to the case where just the target vertex has maximal degree\footnote{Usually Grover's algorithm on quantum walks just considers the case of complete graphs.}. \\

However, in the spirit of using chiral quantum walks and the phases degrees of freedom, it is interesting to ask whether the oracle of Eq.(\ref{eq:Loracle}) can be replaced by a suitable phases configuration while keeping the swift dynamics. In general, this is not possible. Indeed, we already proved that to get the swift evolution, the reduced Hamiltonian in the reduced basis of the starting state vector $\underline{e}_{1}$ should be $2 \times 2$, and by enforcing the diagonal elements of the Laplacian and attributing generic phases to the links, this matrix must have the following structure:
\begin{equation}
\label{eq:HredL}
     \tilde{\mathbf{H}}_{k} = \left( \begin{array}{cc} N & \sqrt{N} \\
    \sqrt{N} & k \end{array} \right)
\end{equation}
where we chose the second vector of the reduced basis to be the $\underline{e}_{E} = \frac{1}{\sqrt{N}} ( \underline{h}_{1} - N \underline{e}_{1} )$, where $\underline{h}_{1}$ is the first column of the full chiral Hamiltonian we are considering (with diagonal elements fixed by the Laplacian) and the number $k$ is the only parameter we may try to adjust by choosing a phases configuration. For simplicity, suppose that the graph $\mathcal{G}'$ is $m$-regular. Then clearly $ \vert k \vert \leq 2 m \textcolor{red}{+ 1}$, since the matrix minor of the chiral Hamiltonian pertaining to $\mathcal{G}'$ has all the diagonal elements equal to $m\textcolor{red}{+1}$ and exactly $m$ nonzero off-diagonal entries in every row, each of which is a complex phase. But by looking at the dynamics generated by the reduced Hamiltonian (\ref{eq:HredL}) we get a returning probability of:
\begin{equation}
    p_{1 \to 1}^{\mathbf{\tilde{H}}_{k}} (t) \ = \ 1 - \dfrac{2 N}{(k - N)^2 + 4 N} \left( 1 - \cos{ \sqrt{(k - N)^2 + 4 N} t}  \right)
\end{equation}
which is sedentary unless \textcolor{red}{$(k - N )\sim O (\sqrt{N})$}. Therefore, on cones over $m$-regular graphs whose degree does not grow with the number of vertices, chiral quantum walks of the Laplacian type starting from the apex cannot be swift. 
Despite its generality, this result relies on the assumption that the dynamics reduces to a $2 \times 2$ subspace, which would be needed to achieve the exact swift evolution. In fact, a more general no-go theorem holds.\\

\begin{theorem}
\label{thm:chiralLsedent}
Let $\mathcal{G} = (\mathcal{V} , \mathcal{E} )$ be a cone over a subgraph $\mathcal{G}'$ of order $N$ (then $\vert \mathcal{V} \vert = N+1$) and denote by $\underline{e}_{1}$ the characteristic vector of the apex of the cone, labelled by $1$. Let $d^{>}$ be the largest degree for the vertices of $\mathcal{G}'$. Then, if $d^{>} \sim o(N)$ for large $N$, any chiral quantum walk of the Laplacian type starting from $1$ is sedentary\footnote{In the original definition by Godsil \cite{godsilsedent}, sedentarity requires that the return probability has the asymptotic form $1 - \frac{k}{N}$ for constant $k$ and large order $N$. Here we slightly relax this definition by allowing asymptotic scaling of the form $1 - \frac{k}{N^{\alpha}}$ for $\alpha > 0$, such that it is still true that the return probability is close to $1$ as $N \to \infty$. In particular, Theorem \ref{thm:chiralLsedent} implies sedentarity with $\alpha = 1/2$.  }; more generally, if $\tilde{\mathbf{L}}$ is a chiral Laplacian on $\mathcal{G}$:
\begin{equation}
    p^{\tilde{\mathbf{L}}}_{1 \to 1} (t) \ \geq \ 1 - \dfrac{ \sqrt{N}}{N - 2 d^{>}}
\end{equation}
(assuming $2 d^{>} < N$).
\end{theorem}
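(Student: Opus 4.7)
The plan is to show that $\underline{e}_{1}$ is very nearly an eigenvector of $\tilde{\mathbf{L}}$ with eigenvalue near $N$, so that its orbit under $e^{- i t \tilde{\mathbf{L}}}$ never strays far from itself up to a global phase. Quantitatively this reduces to three ingredients: a moment computation giving variance $N$ on $\underline{e}_{1}$, a spectral gap for $\tilde{\mathbf{L}}$ obtained from Cauchy interlacing plus Gershgorin, and a Chebyshev-type concentration bound on the overlap with the top eigenvector, which is then turned into a $t$-independent lower bound on the return probability.

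Concretely, I would start by writing $\tilde{\mathbf{L}}$ in block form, isolating the apex from $\mathcal{G}'$:
\[
\tilde{\mathbf{L}} \ = \ \left( \begin{array}{c|c} N & -\underline{h}^{\dagger} \\ \hline -\underline{h} & \tilde{\mathbf{L}}'' \end{array} \right), \qquad \tilde{\mathbf{L}}'' \ = \ \mathbf{D}' + \Id_{N} - \tilde{\mathbf{A}}',
\]
where every entry of $\underline{h} \in \complex^{N}$ has unit modulus. A direct calculation gives $\underline{e}_{1}^{\dagger} \tilde{\mathbf{L}}\, \underline{e}_{1} = N$ and $\underline{e}_{1}^{\dagger} \tilde{\mathbf{L}}^{2} \underline{e}_{1} = N^{2} + N$, so the variance of $\tilde{\mathbf{L}}$ in the state $\underline{e}_{1}$ equals $N$. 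A row-by-row Gershgorin estimate on $\tilde{\mathbf{L}}''$ yields $\| \tilde{\mathbf{L}}'' \| \leq 2 d^{>} + 1$, and since $\tilde{\mathbf{L}}''$ is the Hermitian compression of $\tilde{\mathbf{L}}$ to the orthogonal complement of $\underline{e}_{1}$, Cauchy interlacing forces $\tilde{\mathbf{L}}$ to have at most one eigenvalue strictly above $2 d^{>} + 1$. Since moreover $\mu_{1} \geq \underline{e}_{1}^{\dagger} \tilde{\mathbf{L}}\, \underline{e}_{1} = N$ by the variational characterization, under the hypothesis $2 d^{>} < N$ there is a unique top eigenvalue $\mu_{1} \geq N$ with all remaining eigenvalues confined to $[\, 0 ,\, 2 d^{>} + 1 \,]$.

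Expanding $\underline{e}_{1} = \sum_{k} c_{k}\, \underline{\phi}_{k}$ in the eigenbasis of $\tilde{\mathbf{L}}$ and restricting the variance identity $\sum_{k} ( \mu_{k} - N )^{2} | c_{k} |^{2} = N$ to $k \geq 2$ (for which $( \mu_{k} - N )^{2} \geq ( N - 2 d^{>} - 1 )^{2}$) then produces the Chebyshev-type concentration estimate
\[
1 - | c_{1} |^{2} \ \leq \ \frac{N}{( N - 2 d^{>} - 1 )^{2}}.
\]
Writing the return amplitude as $R(t) = | c_{1} |^{2} e^{- i \mu_{1} t} + \sum_{k \geq 2} | c_{k} |^{2} e^{- i \mu_{k} t}$ and using $\sum_{k \geq 2} | c_{k} |^{2} = 1 - | c_{1} |^{2}$, the triangle inequality immediately yields the $t$-independent lower bound $| R(t) | \geq 2 | c_{1} |^{2} - 1$, and substituting the concentration estimate gives the claim (after relaxing the resulting quadratic expression to the simpler form of the statement).

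The main obstacle is algebraic rather than conceptual: the variance-plus-gap argument naturally delivers a bound of order $N / ( N - 2 d^{>} )^{2}$, which is strictly tighter than $\sqrt{N} / ( N - 2 d^{>} )$ as soon as $N - 2 d^{>} \gg \sqrt{N}$, so some massaging is required to match the clean form of the theorem -- for instance via the (weaker) Cauchy--Schwarz amplitude estimate $| R(t) | \geq | c_{1} |^{2} - \sqrt{1 - | c_{1} |^{2}}$, which makes the $\sqrt{N}$ factor appear in the numerator directly. Everything else is standard, and the conceptual content of the statement is really the observation that any chiral Laplacian on a cone over a bounded-degree graph must possess a unique, isolated top eigenvalue close to $N$ whose eigenvector is almost parallel to $\underline{e}_{1}$, which is what stifles any sizeable transport away from the apex.
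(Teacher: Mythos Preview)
Your proposal is correct and proves the theorem (in fact with a sharper asymptotic bound), but the route is genuinely different from the paper's. The paper argues by \emph{energy conservation}: writing the evolved state as $\alpha\,\underline{e}_{1}+\beta\,\underline{f}$ with $\underline{f}$ normalized and supported on $\mathcal{G}'$, it uses that $\langle\tilde{\mathbf{L}}\rangle=N$ is conserved, bounds $\underline{f}^{\dagger}\tilde{\mathbf{L}}\,\underline{f}\le 2d^{>}$ via the same Gershgorin-type estimate on the compressed block that you invoke, and then controls the cross term by Cauchy--Schwarz, $|\underline{f}^{\dagger}\tilde{\mathbf{L}}\,\underline{e}_{1}|\le\sqrt{N}$; this last step is precisely where the $\sqrt{N}$ in the numerator appears. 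Your argument is instead a second-moment spectral-concentration argument: Cauchy interlacing plus Gershgorin isolate a single eigenvalue $\mu_{1}\ge N$, and the variance identity forces $\underline{e}_{1}$ to have weight at most $N/(N-2d^{>}-1)^{2}$ off the top eigenvector, whence the return amplitude is pinned near $1$. Both proofs share the Gershgorin input on the compression to $\mathcal{G}'$; the difference is that the paper uses only the first moment and pays a Cauchy--Schwarz $\sqrt{N}$ on the off-diagonal coupling, whereas you use the variance together with interlacing and obtain an $O(N/(N-2d^{>})^{2})$ error, which in the regime $d^{>}=o(N)$ is $O(1/N)$ rather than $O(1/\sqrt{N})$. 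So your bound is asymptotically tighter and your argument is more structural (it exhibits $\underline{e}_{1}$ as an approximate top eigenvector); the paper's proof is shorter, avoids any eigenvector analysis, and lands directly on the stated form without the massaging step you flag at the end.
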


\begin{proof}
We will prove this statement via energy conservation, i.e. imposing that the expectation value of $\tilde{\mathbf{L}}$ be the same at the beginning and at the end of the unitary evolution. 
Suppose that at a time $t$ the quantum walk starting from $1$ has evolved into:
\begin{equation}
\label{eq:evole1}
    \underline{e}(t) \ = \ e^{-i t \tilde{\mathbf{L}}} \underline{e}_{1} \ = \ \alpha \underline{e}_{1} \ + \ \beta \underline{f}
\end{equation}
where $\underline{f}$ is a normalized vector with support on $\mathcal{G}'$ ($\underline{f}^{\dagger} \underline{f} = 1$ and $\underline{e}_{1}^{\dagger} \underline{f} = 0$), with $\vert \alpha \vert^{2} + \vert \beta \vert^{2} = 1$.
In the appendix, we prove that if $\underline{f}$ is normalized and with support on $\mathcal{G}'$ having maximum degree $d^{>}$, then:
\begin{equation}
\label{eq:appxlema}
    \underline{f}^{\dagger} \tilde{\mathbf{L}} \underline{f} \ \leq \ 2 d^{>}
\end{equation}
Notice also that:
\begin{equation*}
    \underline{e}^{\dagger} (t) \tilde{\mathbf{L}} \underline{e}(t) \ = \ \underline{e}^{\dagger}_{1} \tilde{\mathbf{L}} \underline{e}_{1} \ = \ N
\end{equation*}
which can be combined with Eq.(\ref{eq:evole1}) to get:
\begin{equation*}
    N = \underline{e}^{\dagger} (t) \tilde{\mathbf{L}} \underline{e}(t) \ \leq \ N \vert \alpha \vert^{2} \ + \ 2 d^{>} \vert \beta \vert^{2} \ + \ \left( \alpha \beta^{*} \underline{f}^{\dagger} \tilde{\mathbf{L}} \underline{e}_{1} + \ h.c. \right)
\end{equation*}
where $h.c.$ stands for the hermitian conjugate of the first term in parenthesis and we invoked Eq.(\ref{eq:appxlema}). Since $\tilde{\mathbf{L}} \underline{e}_{1}$ projected onto $\mathcal{G}'$ is a complex vector with norm $\sqrt{N}$ and $\underline{f}$ is normalized and has support on $\mathcal{G}'$, Cauchy-Schwarz inequality in $\complex^{N}$ implies:
\begin{equation*}
    \vert \underline{f}^{\dagger} \tilde{\mathbf{L}} \underline{e}_{1} \vert \leq \sqrt{N}
\end{equation*}
and consequently, using the fact that $\vert \alpha \beta^{*} + \alpha^{*} \beta \vert \leq 1$:
\begin{equation*}
    \vert  \alpha \beta^{*} \underline{f}^{\dagger} \tilde{\mathbf{L}} \underline{e}_{1} + \ h.c.  \vert \leq \sqrt{N} 
\end{equation*}
In conclusion, we arrived at the following inequality:
\begin{equation*}
    N \ \leq \ N \vert \alpha \vert^{2} \ + \ 2 d^{>} \vert \beta \vert^{2} \ + \ \sqrt{N}
\end{equation*}
which can be rearranged by using $\vert \alpha \vert^{2} + \vert \beta \vert^{2} = 1$ into:
\begin{equation}
    N \ \leq \ (N - 2 d^{>})  \vert \alpha \vert^{2} \ + \ 2 d^{>} \ + \ \sqrt{N}
\end{equation}
If $2 d^{>} < N$, then we get the following bound on $\vert \alpha \vert^{2}$:
\begin{equation}
   \vert \alpha \vert^{2} \ \geq \  1 - \dfrac{\sqrt{N}}{N - 2 d^{>}}
\end{equation}
\end{proof}

\textbf{\emph{Remark.}} The fact that $\mathcal{G}$ is a cone is not strictly necessary for the theorem above to hold. Indeed, it is sufficient that the starting vertex has the largest degree ($N$) and that the degrees of all the other vertices are $o(N)$. In this sense, this result indicates that the diagonal elements of the chiral Laplacians strongly hinder the transport between vertices with highly different degrees, and this issue cannot be overcome by leveraging on chirality.  \\

\section{Conclusions}
Previous results indicate that sedentarity, i.e. a return probability close to $1$ at all times, seems a general property of continuous-time quantum walks starting from a maximal degree vertex. This is a general statement when the generator is the Laplacian, thanks to the universality of the corresponding dynamics \cite{razzoli22}, while it is also true for CTQWs generated by the adjacency matrix when the underlying graph is sufficiently connected. In this paper, we have shown that the sedentarity of adjacency type CTQWs can be amended by considering \emph{chiral} adjacency matrices as generators of the unitary evolution. In particular, a configuration of phases on the edges of a graph can be chosen such that the dynamics starting from the apex of a cone over such a graph will leave the apex in the shortest time possible allowed by unitary evolutions generated by generic Hamiltonians compatible with the graph topology. We termed this construction \emph{swift chiral quantum walks} and the question about its feasibility on generic graphs naturally lead to the nontrivial problem of the existence of a swift phases configuration, which we were able to settle for several families of graphs. By contrast, the sedentarity of Laplacian quantum walks cannot always be resolved by adding phases to the edges; we proved that whenever the initial vertex has a degree which is considerably larger than the degrees of all other vertices in the graph, \emph{any} chiral Laplacian will still generate a sedentary evolution. The relevance of these results partly relies in their generality, since up to now very little was known about tasks that can or cannot be achieved by introducing chirality into CTQWs on general graphs.  
Moreover, Theorem \ref{thm:cubicswift} suggests that deep mathematical questions about graphs could arise from the analytical study of the theory of chiral quantum walks, despite the modest attention they elicited so far. Indeed, necessary and sufficient graph-theoretical conditions for generic odd-degree regular graphs to admit a swift phases configuration are still unknown, let alone for non-regular graphs.

\appendix

\section{Proof of theorems on swift phases configurations}
\label{apx:proofs}
\subsection{Proof of Theorem \ref{thm:evenswift} }

\begin{proof} 
We will prove that one can associate phases of $i$ and $-i$ to all the directed edges of $\mathcal{G}'$ such that the sum of the outgoing phases from any vertex is $0$. First notice that this is equivalent to assigning arrows to the edges: each arrow indicates a phase of $i$ when the edge is seen in the corresponding direction, and a factor of $-i$ for the opposite direction. A solution to our problem then would be a configuration of arrows on all edges such that each vertex has an equal number of outgoing and incoming arrows. To achieve such a configuration, let us start from a vertex, pick a random edge and assign an arrow on it directed towards the corresponding neighbour. Repeat this operation again and again, possibly landing more than once on the same vertex, but with the condition that an edge which has already been assigned an arrow cannot be traversed again. This can always be done, since each time we arrive at a vertex (except for the initial one) the number of arrows already assigned to its edges will be odd (two arrows for each time we visited that vertex before, and one additional arrow for visiting it again at the present stage). Therefore, since its connectivity is even by hypothesis, there will be another edge attached to the vertex without an already assigned arrow. Since the graph is finite and this process can continue each time we land on a vertex which is not the starting one, sooner or later we are bound to reach the initial vertex. At that point, each vertex will have an even number of assigned arrows attached to it, and half of them will be outgoing and the other half will be incoming, since each time we arrived at the vertex with an arrow and we left it with another one. Let us now delete all the edges that have been labelled with an arrow: the remaining graph will still satisfy the initial property that each vertex has an even degree, since we removed an even number of edges from each vertex. Therefore, we can start the process again on the remaining edges, until the whole graph will be covered by arrows fulfilling our condition. \\
\end{proof}

 \textcolor{red}{Notice that a shorter proof can be given by relying on Euler's theorem from graph theory, stating that a connected graph has a closed Eulerian trail (a closed path visiting each vertex exactly once) if and only if all its vertices have even degree. }
\subsection{Proof of theorem \ref{thm:cubicswift} }
\begin{proof}
To prove that the first and second condition are equivalent, consider a $3$-regular graph with a perfect matching: the edges that are not involved in the matching must form cycles, since each vertex will have exactly $2$ of such edges attached to it. Moreover, these cycles must be vertex disjoint for the same reason. On the other hand, if $\mathcal{G}$ can be covered with vertex disjoint cycles, each vertex will have exactly one edge attached to it which is not involved in the cycles, and the collection of these edges must form a perfect matching.\\ 

We will  thus prove the theorem by showing that the \textcolor{red}{first condition is necessary and the second is sufficient} for the existence of a swift phases configuration. First, suppose that $\mathcal{G}$ can be covered with vertex disjoint cycles. For each cycle, we pick a random orientation and we attribute a phase of $e^{\frac{2 i \pi}{3}}$ to all of its edges in the chosen orientation. We then assign a phase of $1$ to all the remaining edges, which form a perfect matching, as we argued before. In this way, the three edges attached to any given vertex will have phases of $1$, $e^{ \frac{2i \pi}{3}}$ and $e^{-\frac{2i \pi}{3}}$, summing to $0$ as requested.\\

Now suppose instead that there exists a swift phases configuration on $\mathcal{G}$ and, for convenience, assume that $\mathcal{G}$ is connected (otherwise the reasoning would apply to each connected component). Pick a random edge, which will have an assigned phase of $e^{i \phi}$ in a given orientation, say from vertex $v_{1}$ to vertex $v_{2}$. Now notice that three complex phases can sum to $0$ if and only if they form an equilateral triangle in the complex plane; therefore the phases associated with the other two edges attached to $v_{1}$ must be $e^{i\phi+\frac{2i\pi}{3}}$ and $e^{i\phi-\frac{2i\pi}{3}}$. The same reasoning applies to all other edges, therefore fixing a phase on a single edge restricts the possible choice of phases on all the other edges to a finite set, which we can take to be $\{  e^{i\phi}, e^{i\phi+\frac{2i\pi}{3}}, e^{i\phi-\frac{2i\pi}{3}} \}$ for an appropriately chosen orientation of each edge. At this point there are two possibilities: 
\begin{itemize}
    \item The set $\{  e^{i\phi}, e^{i\phi+\frac{2i\pi}{3}}, e^{i\phi-\frac{2i\pi}{3}} \}$ is equal (as a set) to $\{1, e^{\frac{2i\pi}{3}},e^{-\frac{2i\pi}{3}} \} $ or to $\{-1, -e^{\frac{2i\pi}{3}},-e^{-\frac{2i\pi}{3}} \} $
    \item The set $\{  e^{i\phi}, e^{i\phi+\frac{2i\pi}{3}}, e^{i\phi-\frac{2i\pi}{3}} \}$ and its complex conjugate $\{  e^{-i\phi}, e^{-i\phi+\frac{2i\pi}{3}}, e^{-i\phi-\frac{2i\pi}{3}} \}$ have no number in common
\end{itemize}
In the first case, the edges with phase $1$ (or $-1$) must form a perfect matching, since each vertex must have exactly one edge with such a phase attached to it. In the second case, all the edges with a phase of $e^{i \phi}$ (in any one of the two orientations) still form a perfect matching: indeed, each vertex will have exactly one edge with phase $e^{i \phi}$ or $e^{- i \phi}$ attached to it (and it cannot have both $e^{i \phi}$  and $e^{-i \phi}$). In either case, we conclude that $\mathcal{G}$ must admit a perfect matching.
\end{proof}

\subsection{Proof of Theorem \ref{thm:completeswift}} 

\begin{proof}
An explicit construction when $\mathcal{G}'$ is a complete graph was implicitly provided in \cite{frigerio21b}.
A complete bipartite graph $K_{N_{1},N_{2}}$ is a graph with two sets of vertices $\mathcal{V}_{A}$ with cardinality $N_{1}$ and $\mathcal{V}_{B}$ with cardinality $N_{2}$, such that no edge connects vertices from the same set and every vertex in $\mathcal{V}_{A}$ is connected with every vertex in $\mathcal{V}_{B}$. The adjacency matrix of such a graph can be written in the following form:
\begin{equation}
    \mathbf{A}_{K_{N_{1},N_{2}}} = \left( \begin{array}{cc} \mathbf{0}_{N_{1} \times N_{1}} & \mathbf{B} \\
    \mathbf{B}^{T} & \mathbf{0}_{N_{2} \times N_{2}} \end{array} \right)
\end{equation}
where $\mathbf{B}$ is an $N_{1} \times N_{2}$ matrix filled with $1$s. A chiral adjacency matrix fulfilling Eq.(\ref{eq:condphases}) can be constructed by replacing $\mathbf{B}$ with a matrix of phases $\tilde{\mathbf{B}}$ with entries $\tilde{\mathbf{B}}_{jk} = e^{ i \frac{2 \pi j}{ N_{1} } } e^{i \frac{ 2 \pi k}{ N_{2} }} $, since $\sum_{j=1}^{N_{1}} \tilde{\mathbf{B}}_{jk} = \sum_{k=1}^{N_{2}} \tilde{\mathbf{B}}_{jk} = 0$.
\end{proof}

\section{A result on the spectrum of chiral adjacency matrices}

\begin{theorem}
Let $\mathcal{G}$ be a graph and let $\tilde{\mathbf{A}}$ be a chiral adjacency matrix on $\mathcal{G}$. Let $d^{>}$ be the largest degree among the vertices of $\mathcal{G}$. Then if $\lambda_{A}$ is an eigenvalue of $\tilde{\mathbf{A}}$:
\begin{equation}
    \vert \lambda_{A} \vert \leq d^{>}
\end{equation}
Analogously, if $\tilde{\mathbf{L}}$ is a chiral Laplacian on $\mathcal{G}$ and $\lambda_{L}$ \textcolor{red}{is} one of its eigenvalues:
\begin{equation}
    \vert \lambda_{L} \vert \leq 2 d^{>}
\end{equation}
\end{theorem}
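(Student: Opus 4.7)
The plan is to apply Gershgorin's disc theorem, which is almost tailor-made for this kind of off-diagonal-bounded matrix. Recall that for any complex matrix $\mathbf{M}$ of size $N$, every eigenvalue lies inside at least one of the discs $\{ z \in \complex : |z - \mathbf{M}_{jj}| \leq \sum_{k \neq j} |\mathbf{M}_{jk}| \}$ for $j = 1, \dots, N$. This is a standard, elementary result (proved in a few lines by taking a dominant component of an eigenvector), and can be assumed.

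For the chiral adjacency matrix $\tilde{\mathbf{A}}$, the diagonal entry in row $j$ is $0$, while the off-diagonal entries are complex numbers of modulus $1$ in the $d_j$ positions corresponding to neighbours of vertex $j$ and $0$ elsewhere. Hence $\sum_{k \neq j} |\tilde{\mathbf{A}}_{jk}| = d_j \leq d^{>}$. Gershgorin then gives $|\lambda_A| \leq d^{>}$ for every eigenvalue, as claimed.

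For the chiral Laplacian $\tilde{\mathbf{L}} = \mathbf{D} - \tilde{\mathbf{A}}$, the diagonal entry in row $j$ is $d_j$, the off-diagonal entries again have moduli that sum to $d_j$, and Gershgorin gives $|\lambda_L - d_j| \leq d_j$. The triangle inequality then yields $|\lambda_L| \leq 2 d_j \leq 2 d^{>}$, which is the second claim. (A completely equivalent route, if one prefers, is to note that hermiticity allows one to bound eigenvalues by the operator norm, and then to use the quadratic form inequality $|\underline{x}^{\dagger} \tilde{\mathbf{A}} \underline{x}| \leq \sum_{j \sim k} |x_j||x_k| \leq \sum_{j} d_j |x_j|^{2} \leq d^{>}$ obtained from $2|x_j||x_k| \leq |x_j|^{2} + |x_k|^{2}$; for $\tilde{\mathbf{L}}$ one then adds the diagonal contribution $\underline{x}^{\dagger} \mathbf{D} \underline{x} \leq d^{>}$.)

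There is no real obstacle here: both bounds follow directly from a single application of Gershgorin, and the only thing to verify is the trivial fact that replacing $\pm 1$ entries by unit complex phases does not alter the row sums of absolute values. The hermiticity of $\tilde{\mathbf{A}}$ and $\tilde{\mathbf{L}}$ guarantees that the spectrum is real, so the bounds can actually be tightened to real intervals $[-d^{>}, d^{>}]$ and $[0, 2 d^{>}]$ (the lower bound on $\lambda_L$ coming from $\tilde{\mathbf{L}} \succeq 0$, which however is not needed for the statement).
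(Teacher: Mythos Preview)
Your proof is correct and is essentially the same as the paper's: the paper writes out, inline, precisely the dominant-eigenvector-component argument that proves Gershgorin's theorem (pick $j_{0}$ with $|x_{j_{0}}|$ maximal, bound $|\lambda||x_{j_{0}}| \leq |x_{j_{0}}|\sum_{k}|\tilde{\mathbf{A}}_{j_{0}k}| \leq |x_{j_{0}}|\,d^{>}$), while you simply cite the theorem by name. The content is identical.
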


\begin{proof}
Let $\underline{x}$ be an eigenvector of $\tilde{\mathbf{A}}$ with eigenvalue $\lambda$ and let $\vert x_{j_{0}} \textcolor{red}{\vert} = \sup_{j}  \vert x_{j} \vert $ be the largest entry of $\underline{x}$ (in absolute value). Then:
\begin{equation}
    \vert \lambda \vert \vert x_{j_{0}} \vert \ = \ \vert ( \tilde{\mathbf{A}} \underline{x} )_{j_{0}} \vert \ \leq \ \sum_{k} \vert \tilde{\mathbf{A}}_{j_{0} k} x_{k} \vert \ \leq \ \vert x_{j_{0}}  \vert \sum_{k} \vert \tilde{\mathbf{A}}_{j_{0} k} \vert \ \leq \  \vert x_{j_{0}}  \vert d^{>}
\end{equation}
from which the thesis follows\footnote{Notice that $| x_{j_{0}} | > 0$ and it cannot vanish since an eigenvector is never the null vector.}. To derive the above inequalities, we used the fact that the entries of $\tilde{\mathbf{A}}$ are complex numbers of modulus $1$ and that the maximum number of nonzero entries in each row is $d^{>}$, by definition. The same reasoning applied to a chiral Laplacian gives the other bound, since $\sum_{k} \vert \tilde{\mathbf{L}}_{j_{0} k} \vert$ is upper bounded by $2 d^{>}$, because of the diagonal contribution.
\end{proof}

\begin{corol}
If $\underline{f}$ is a normalized vector with support on $\mathcal{G}$, then for any chiral adjacency matrix $\tilde{\mathbf{A}}$ and any chiral Laplacian $\tilde{\mathbf{L}}$:
\begin{equation}
    \vert \underline{f}^{\dagger} \tilde{\mathbf{A}} \underline{f} \vert \ \leq \ d^{>} \ \ , \ \ \ \ \vert \underline{f}^{\dagger} \tilde{\mathbf{L}} \underline{f} \vert \ \leq \ 2 d^{>} 
\end{equation}
\end{corol}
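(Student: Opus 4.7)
The plan is to derive this corollary as an immediate consequence of the preceding theorem by combining Hermiticity with the spectral theorem. Since $\tilde{\mathbf{A}}$ and $\tilde{\mathbf{L}}$ are Hermitian by definition (chiral matrices are required to be Hermitian), each admits an orthonormal eigenbasis with real eigenvalues, and the bounds proved just above tell us that every such eigenvalue lies in $[-d^{>},d^{>}]$ for $\tilde{\mathbf{A}}$ and in $[-2d^{>},2d^{>}]$ for $\tilde{\mathbf{L}}$.

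Concretely, I would first fix an orthonormal basis $\{\underline{x}_{k}\}$ of eigenvectors of $\tilde{\mathbf{A}}$ with eigenvalues $\lambda_{k}$ and expand the unit vector $\underline{f} = \sum_{k} c_{k} \underline{x}_{k}$, with $\sum_{k} \vert c_{k}\vert^{2} = 1$. Then one computes
\begin{equation}
\underline{f}^{\dagger} \tilde{\mathbf{A}} \underline{f} \ = \ \sum_{k} \lambda_{k} \vert c_{k} \vert^{2},
\end{equation}
and by the triangle inequality together with the bound $\vert \lambda_{k}\vert \leq d^{>}$ from the preceding theorem,
\begin{equation}
\vert \underline{f}^{\dagger} \tilde{\mathbf{A}} \underline{f} \vert \ \leq \ \sum_{k} \vert \lambda_{k} \vert \vert c_{k} \vert^{2} \ \leq \ d^{>} \sum_{k} \vert c_{k} \vert^{2} \ = \ d^{>}.
\end{equation}
The argument for $\tilde{\mathbf{L}}$ is verbatim the same, just replacing the eigenvalue bound $d^{>}$ with $2 d^{>}$, which is again provided by the preceding theorem.

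There is essentially no obstacle here, since the result is a standard variational/spectral inequality: the maximum absolute value of the Rayleigh quotient of a Hermitian operator on the unit sphere is its spectral radius. The only thing worth being careful about is making sure that Hermiticity (and hence the existence of a real eigenvalue decomposition) is invoked explicitly, because the previous theorem was stated for eigenvalues without using that the Rayleigh quotient is bounded by the spectral radius; this step is what upgrades an eigenvalue bound into an expectation-value bound on an arbitrary normalized vector.
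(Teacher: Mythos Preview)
Your proposal is correct and is exactly the standard spectral argument the paper leaves implicit: the corollary is stated without proof, as an immediate consequence of the eigenvalue bounds just established combined with Hermiticity. Your Rayleigh-quotient computation is precisely the intended justification.
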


\bibliographystyle{iopart-num}
\bibliography{nonsedcqw}

\end{document}